\newtheorem{thm}{Theorem}
\newtheorem{lem}[thm]{Lemma}
\theoremstyle{definition}
\newtheorem{defn}[thm]{Definition}
\newtheorem*{example*}{Example}
\begin{document}
\title{Minimal Envy Matchings in the Hospitals/Residents Problem with Lower Quotas}
\author{Changyong Hu\thanks{Department of Electrical and Computer Engineering, University of Texas at Austin, Austin, Texas 78705, USA. E-mail: \texttt{colinhu9@utexas.edu.}} \and Vijay K. Garg\thanks{Department of Electrical and Computer Engineering, University of Texas at Austin, Austin, Texas 78705, USA. E-mail: \texttt{garg@ece.utexas.edu}.}
}
\maketitle

\begin{abstract}
In the Hospitals/Residents problem, every hospital has an upper quota that limits the number of residents assigned to it. While, in some applications, each hospital also has a lower quota for the number of residents it receives. In this setting, a stable matching may not exist. Envy-freeness is introduced as a relaxation of stability that allows blocking pairs involving a resident and an empty position of a hospital. While, envy-free matching might not exist either when lower quotas are introduced. We consider the problem of finding a feasible matching that satisfies lower quotas and upper quotas and minimizes envy in terms of envy-pairs and envy-residents in the Hospitals/Resident problem with Lower Quota. We show that the problem is NP-hard with both envy measurement. We also give a simple exponential-time algorithm for the Minimum-Envy-Pair HRLQ problem.

\end{abstract}

\section{Introduction}
In the Hospitals/Residents problem, every hospital has an upper quota that limits the number of residents assigned to it. While, in some applications, each hospital also has a lower quota for the number of residents it receives. This extension of the HR problem is referred to as the Hospitals/Residents problem with Lower Quotas (HRLQ, for short) in the literature. However, the existence of a stable matching is not always true for a given HRLQ instance. This can be easily observed by the well-known Rural Hospitals theorem \cite{gale1985some, roth1984evolution, roth1986allocation} stating that each hospital is assigned the same number of residents in any stable matching for a given HR instance.

Since there might be no stable matching given a HRLQ instance, one natural approach is to weaken the requirement of stability. Envy-freeness is a relaxation of stability that allows blocking pairs involving a resident and an empty position of a hospital. Structural results of envy-free matchings were investigated in \cite{wu2018lattice} showing that the set of envy-free matchings forms a lattice. Envy-free matchings with lower quotas has recently been studied in \cite{yokoi2020envy}. Fragiadalis et al. \cite{fragiadakis2016strategyproof} studies envy-free matchings (called fairness in their papar) when the preference lists are complete and the sum of lower quotas of all hospitals does not exceed the number of residents. In this restricted setting, envy-free matching always exists and they gave a linear-time algorithm called extended-seat deferred acceptance (ESDA) to find an envy-free matching. However, if the preference lists are not complete, envy-free matching may not exist. Yokoi \cite{yokoi2020envy} provided a characterization of envy-free matchings in HRLQ, connecting them to stable matchings in a modified HR instance so that the existence of envy-free matching can be decided by running DA on this modified HR instance. 

Given a HRLQ instance, we know that stable matchings or envy-free matchings might not exist. Hamada et al. \cite{hamada2016hospitals} considered the problem of minimizing the number of blocking pairs among all feasible matchings (feasible matching means a matching satisfies both lower and upper quotas of each hospital). They showed hardness of approximation of the problem and provided an exponential-time exact algorithm. 

In this paper, we consider the problem of minimizing envy (defined later) among all feasible matchings given a HRLQ instance. Given a feasible matching, the envy among all residents can be measured by the number of envy-pairs or the number of residents involved in some envy-pairs.

We show that for both measurements, the problem is NP-hard. We also provide an exponential-time algorithm to find a feasible matching minimizing the number of envy-pairs.

\subsection{Related Works} Popular matching is another relaxation of stability which preserves ``global" stability in the sense that no majority of residents wish to alter into another feasible matching. Popular matching always exists. Nasre and Nimbhorkar \cite{nasre2017popular} proposed an efficient algorithm to compute a maximum cardinality matching that is popular amongst all the feasible matchings in an HRLQ instance. When there exists a stable matching given a HRLQ instance, it is known that every stable matching is popular. 
\section{Preliminaries}
An instance $I = (G, \succ)$ of the Hospitals/Residents Problem with Lower Quotas (HRLQ for short) consists of a bipartite graph $G = (R \cup H, E)$, where $R$ is a set of residents and $H$ is a set of hospitals, and an edge $(r,h) \in E$ denotes that $r$ and $h$ are mutually acceptable, and a preference system $\succ$ such that every vertex (resident and hospital) in $G$ ranks its neighbors in a strict order, referred as the preference list of the vertex. If a vertex $a$ prefers its neighbor $b_1$ over $b_2$, we denote it by $b_1 \succ_a b_2$. Each hospital $h$ has a lower quota $l_h$ and an upper quota $u_h$ ($0 \leq l_h \leq u_h$). Sometimes we write $[l_h, u_h]$ to denote the lower and upper quota for some hospital $h$.

A matching $M \subseteq E$ in $G$ is an assignment of residents to hospitals such that each resident is matched to at most one hospital, and every hospital $h$ is matched to at most $u_h$ residents. Let $M(r)$ denote the hospital that $r$ is matched in $M$. If $r$ is unmatched in $M$, we let $M(r) = \emptyset$. For any neighbor $h$ of $r$, we have $h \succ_r \emptyset$ since we assume that any resident prefers to be matched over to be unmatched. We say that a hospital $h$ is under-subscribed in $M$ if $|M(h)| < u_h$, $h$ is fully-subscribed if $|M(h)| = u_h$, $h$ is over-subscribed if $|M(h)| > u_h$ and $h$ is deficient if $|M(h)| < l_h$. A matching $M$ is feasible in a HRLQ instance if no hospital is deficient or over-subscribed in $M$. The HRLQ problem is to match residents to hospitals under some optimality condition such that the matching is feasible. Envy-freeness is defined as follows.

\begin{defn}
Given a matching $M$, a resident-hospital pair $(r,h)$ is an {\em envy-pair} if $h \succ_r M(r)$ and $r \succ_h r'$ for some $r' \in M(h)$.
\end{defn}

\begin{defn}
Given a matching $M$, a resident $r$ has {\em justified envy} toward $r'$ who is matched to hospital $h$ if $h \succ_r M(r)$ and $r \succ_h r'$.
\end{defn}

\begin{defn}
A matching $M$ is {\em envy-free} if there is no {\em envy-pair} in $M$. Equivalently, a matching $M$ is envy-free if no resident has {\em justified envy} toward other residents in $M$.
\end{defn}

In case that envy-free matchings may not exist in a given HRLQ instance. We define two other problems that minimizes envy in terms of envy-pairs and envy-residents.

\textbf{Minimum-Envy-Pair Hospitals/Residents Problem with Lower Quotas} (Min-EP HRLQ for short) is the problem of finding a feasible matching with the minimum number of envy-pairs. 0-1 Min-EP HRLQ is the restriction of Min-EP HRLQ where a quota of each hospital is either $[0,1]$ or $[1,1]$. 

\begin{defn}
Given a matching $M$, a resident $r$ is an envy-resident if there exists $h$ such that $(r,h) \in E$ is an envy-pair in $M$.
\end{defn}

\textbf{Minimum-Envy-Resident Hospitals/Residents Problem with Lower Quotas} (Min-ER HRLQ for short) is the problem of finding a feasible matching with minimum number of envy-residents. 0-1 Min-ER HRLQ is defined similarly.

We assume without loss of generality that the number of residents is at least the sum of the lower quotas of all hospitals, since otherwise there is no feasible matching. In other papers, they impose the Complete List restriction (CL-restriction for short). There always exists envy-free matchings in CL-restriction instances, and a maximum-size one can be found in polynomial time \cite{fragiadakis2016strategyproof, krishnaa2020envy}.

\section{Minimum-Envy-Pair HRLQ}
In this section, we consider the problem of minimizing the number of envy-pairs in HRLQ. We prove a NP-hardness result for 0-1 Min-EP HRLQ in the following theorem.

\begin{thm}
0-1 Min-EP HRLQ is NP-hard. 
\end{thm}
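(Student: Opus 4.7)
The plan is to establish NP-hardness by a polynomial-time reduction from a classical NP-complete problem; Minimum Vertex Cover is a natural candidate because envy-pair counting has a clean parallel with edge-coverage counting. Given an instance $(G=(V,E), k)$ of Vertex Cover, I would construct a 0-1 HRLQ instance $I(G)$ together with a threshold $K$ so that $I(G)$ admits a feasible matching with at most $K$ envy-pairs if and only if $G$ has a vertex cover of size at most $k$.

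The construction would use gadgets of two types. For each vertex $v \in V$, a \emph{vertex gadget} consists of residents $r_v^{\text{in}}, r_v^{\text{out}}$ and a $[1,1]$ hospital $h_v$ that prefers $r_v^{\text{in}}$ over $r_v^{\text{out}}$: the lower quota forces exactly one of the two into $h_v$, encoding the Boolean decision ``$v$ is in the cover'' (meaning $r_v^{\text{out}} \in M(h_v)$, so that $r_v^{\text{in}}$ is free to serve edge hospitals). For each edge $e=\{u,v\} \in E$, an \emph{edge gadget} consists of a $[1,1]$ edge hospital $h_e$ with a small set of auxiliary residents; the preference lists are calibrated so that if at least one of $u,v$ is in the cover, $h_e$ can be filled without envy, while if neither is, every feasible filling of $h_e$ triggers exactly one envy-pair. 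Because one resident can only fill one slot, each $r_v^{\text{in}}$ is replicated into $\deg(v)$ copies, chained together by additional $[1,1]$ consistency hospitals so that all copies share the same in/out choice, and the preferences of $r_v^{\text{in}}$ place its adjacent edge hospitals above $h_v$ so that freeing $r_v^{\text{in}}$ does not itself create envy at $h_v$.

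Calibrating the constants so that the total envy-pair count takes the form $\alpha\,|S|+\beta\,(\text{uncovered edges})$, with $\beta \gg \alpha$, makes the minimum-envy feasible matching correspond to a minimum vertex cover, yielding the reduction. The main obstacle I expect is the \emph{cross-gadget interaction}: one must design the preference lists so that no spurious envy-pair arises between residents of one gadget and hospitals of another, and so that the consistency chain attached to $r_v$ creates no envy of its own in either branch of the in/out choice. This is typically handled by aggressive pruning of preference lists (each resident listing only hospitals of its own gadget plus its adjacent edge hospitals) and by inserting low-priority dummy residents to absorb under-subscription without triggering envy. Verifying that the only envy-pairs counted are the intended ones --- while every hospital quota remains in $\{[0,1],[1,1]\}$ --- is the most delicate part of the argument, and once it is in place the equivalence between low-envy feasible matchings and small vertex covers yields NP-hardness of 0-1 Min-EP HRLQ.
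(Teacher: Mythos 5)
Your choice of source problem (Vertex Cover) and the overall shape of the argument match the paper's, but what you have written is a plan with the hardest steps deferred rather than a proof, and the specific gadget design you sketch has concrete failure points. First, the collision case: if both endpoints $u,v$ of an edge $e$ are in the cover, your construction sends both $r_{u,e}^{\mathrm{in}}$ and $r_{v,e}^{\mathrm{in}}$ toward the single $[1,1]$ slot of $h_e$; the loser must land somewhere it ranks below its adjacent edge hospitals, and since $h_{u}$ (or $h_v$) then holds the less-preferred $r^{\mathrm{out}}$ resident, an unintended envy-pair appears. This breaks the clean decomposition $\alpha|S|+\beta(\text{uncovered edges})$ on which your equivalence rests. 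Second, your consistency chains cannot \emph{force} all copies of $r_v^{\mathrm{in}}$ to make the same in/out choice --- lower quotas force hospitals to be filled, not residents to coordinate --- so a feasible matching may split a chain, and the only deterrent is additional envy-pairs that you neither construct nor count. Third, you invoke ``$\beta\gg\alpha$'' without an amplification device: in a 0-1 instance a single mis-set edge hospital does not automatically generate many envy-pairs, so you must replicate something to make a wrong choice provably more expensive than all baseline envy combined.

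For contrast, the paper's reduction sidesteps all three issues. The vertex decision is encoded not by in/out residents but by which of two resident sets ($C$ of size $K$, $F$ of size $n-K$, both with complete lists over $V$) a $[1,1]$ vertex hospital receives; this creates at most $n^2$ unavoidable baseline envy-pairs. Each edge $(v_i,v_j)$ gets a self-contained cycle gadget of length $4l$ with $l=n^2+1$, which admits exactly two perfect matchings $M_0^{i,j}$ and $M_1^{i,j}$, each contributing exactly one internal envy-pair; the gadget residents place $v_i$ or $v_j$ in the \emph{middle} of their lists, so that choosing $M_0^{i,j}$ while $v_j$ is matched to an $F$-resident produces $l>n^2+m$ envy-pairs at once. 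No gadget resident ever competes for a vertex hospital's slot or another gadget's slots, so there is no cross-gadget interaction to control, and the threshold $n^2+m$ cleanly separates yes- and no-instances. If you want to salvage your design, the essential missing ingredient is exactly this kind of polynomial-factor replication that makes every ``cheating'' configuration cost more than the entire intended budget; until the calibration and the cross-gadget non-interference are actually exhibited, the reduction is not established.
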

\begin{proof}
We give a polynomial-time reduction from the well-known NP-complete problem Vertex Cover. Below is the definition of the decision version of the Vertex Cover problem. Given a graph $G=(V,E)$ and a positive integer $K \leq |V|$, we are asked if there is a subset $C \subseteq V$ such that $|C| \leq K$, which contains at least one endpoint of each edge of $G$. 

{\bf Reduction:} Given a graph $G=(V,E)$ and a positive integer $K$, which is an instance of the Vertex Cover problem, we construct an instance $I$ of 0-1 Min-EP HRLQ. Define $n = |V|$, $m = |E|$ and $l = n^2 + 1$. The set of residents is $R = C \cup F \cup S$ and the set of hospitals is $H = V \cup T$. Each set is defined as follows:

\begin{align*}
    C &= \{c_i ~|~ 1 \leq i \leq K\} \\
    F &= \{f_i ~|~ 1 \leq i \leq n - K\} \\
    S^{i,j} &= \{s^{i,j}_{0,a} ~|~ 1 \leq a \leq l\} \cup \{s^{i,j}_{1,a} ~|~ 1 \leq a \leq l\} ~ ((v_i, v_j) \in E, i < j)\\
    S &= \bigcup S^{i,j}\\
    V &= \{v_i ~|~ 1 \leq i \leq n\}\\
    T^{i,j} &= \{t^{i,j}_{0,a} ~|~ 1 \leq a \leq l\} \cup \{t^{i,j}_{1,a} ~|~ 1 \leq a \leq l\} ~ ((v_i, v_j) \in E, i < j)\\
    T &= \bigcup T^{i,j}
\end{align*}

Each hospital in $H = V \cup T$ has a quota $[1,1]$. Note that $|C| + |F| = |V| = n$ and $|S| = |T| = 2ml$. Thus $|H| + |R| = 2n+4ml$, which is polynomial in $n$ and $m$.

Next, we construct the preference lists of $I$. The preference lists of residents is shown as follows:
\begin{align*}
    c_i &: [[V]] && (1 \leq i \leq K)\\
    f_i &: [[V]] && (1 \leq i \leq n-K)\\
    s^{i,j}_{0,1} &: t^{i,j}_{0,1} ~~~ v_i ~~~ t^{i,j}_{1,1} && ((v_i, v_j) \in E, i < j)\\
    s^{i,j}_{0,2} &: t^{i,j}_{0,2} ~~~ v_i ~~~ t^{i,j}_{0,3} && ((v_i, v_j) \in E, i < j)\\
    &~~~~~\vdots\\
    s^{i,j}_{0,l-1} &: t^{i,j}_{0,l-1} ~~~ v_i ~~~ t^{i,j}_{0,l} && ((v_i, v_j) \in E, i < j)\\
    s^{i,j}_{0,l} &: t^{i,j}_{0,l} ~~~ v_i ~~~ t^{i,j}_{0,1} && ((v_i, v_j) \in E, i < j)\\
    s^{i,j}_{1,1} &: t^{i,j}_{0,2} ~~~ v_j ~~~ t^{i,j}_{1,2} && ((v_i, v_j) \in E, i < j)\\
    s^{i,j}_{1,2} &: t^{i,j}_{1,2} ~~~ v_j ~~~ t^{i,j}_{1,3} && ((v_i, v_j) \in E, i < j)\\
    &~~~~~\vdots\\
    s^{i,j}_{1,l-1} &: t^{i,j}_{1,l-1} ~~~ v_j ~~~ t^{i,j}_{1,l} && ((v_i, v_j) \in E, i < j)\\
    s^{i,j}_{1,l} &: t^{i,j}_{1,l} ~~~ v_j ~~~ t^{i,j}_{1,1} && ((v_i, v_j) \in E, i < j)
\end{align*}
, where $[[V]]$ denotes a fixed order of elements in $V$ in an increasing order of indices.

The preference lists of hospitals is shown as follows:
\begin{align*}
    v_i &: [[C]] ~~ [[S_i]] ~~ [[F]] && (1 \leq i \leq n)\\
    t^{i,j}_{0,1} &: s^{i,j}_{0,1} ~~~ s^{i,j}_{0,l} && ((v_i, v_j) \in E, i < j)\\
    t^{i,j}_{0,2} &: s^{i,j}_{1,1} ~~~ s^{i,j}_{0,2} && ((v_i, v_j) \in E, i < j)\\
    &~~~~~\vdots\\
    t^{i,j}_{0,l-1} &: s^{i,j}_{0,l-2} ~~~ s^{i,j}_{0,l-1} && ((v_i, v_j) \in E, i < j)\\
    t^{i,j}_{0,l} &: s^{i,j}_{0,l-1} ~~~ s^{i,j}_{0,l} && ((v_i, v_j) \in E, i < j)\\
    t^{i,j}_{1,1} &: s^{i,j}_{0,1} ~~~ s^{i,j}_{1,l} && ((v_i, v_j) \in E, i < j)\\
    t^{i,j}_{1,2} &: s^{i,j}_{1,1} ~~~ s^{i,j}_{1,2} && ((v_i, v_j) \in E, i < j)\\
    &~~~~~\vdots\\
    t^{i,j}_{1,l-1} &: s^{i,j}_{1,l-2} ~~~ s^{i,j}_{1,l-1} && ((v_i, v_j) \in E, i < j)\\
    t^{i,j}_{1,l} &: s^{i,j}_{1,l-1} ~~~ s^{i,j}_{1,l} && ((v_i, v_j) \in E, i < j)
\end{align*}
, where $[[C]]$ and $[[F]]$ are as before a fixed order of all the residents in $C$ and $F$, respectively, in an increasing order of indices, $[[S_i]]$ is an arbitrary order of all the residents in $S$ that is acceptable to $v_i$ as determined by the preference lists of residents.

Since each hospital in $V \cup T$ has a quota $[1,1]$, any feasible matching must be a perfect matching in $I$. Further, the subgraph between $C \cup F$ and $V$ is a complete graph and there is no edge between $C \cup F$ and $T$ in $I$. Thus, any feasible matching must include a perfect matching between $C \cup F$ and $V$.

For each edge $(v_i, v_j) \in E (i < j)$, there are the set of residents $S^{i,j}$ and the set of hospitals $T^{i,j}$. We call this pair of sets a $g_{i,j}$-gadget, and write it as $g_{i,j} = (S^{i,j}, T^{i,j})$. For each gadget $g_{i,j}$, let us define two perfect matchings between $S^{i,j}$ and $T^{i,j}$ as follows:

\begin{align*}
    M^{i,j}_0 &= \{(s^{i,j}_{0,1}, t^{i,j}_{0,1}), (s^{i,j}_{0,2}, t^{i,j}_{0,2}), \cdots, (s^{i,j}_{0,l-1}, t^{i,j}_{0,l-1}), (s^{i,j}_{0,l}, t^{i,j}_{0,l}),\\
    &(s^{i,j}_{1,1}, t^{i,j}_{1,2}), (s^{i,j}_{1,2}, t^{i,j}_{1,3}), \cdots, (s^{i,j}_{1,l-1}, t^{i,j}_{1,l}), (s^{i,j}_{1,l}, t^{i,j}_{1,1})\}, and\\
    M^{i,j}_1 &= \{(s^{i,j}_{0,1}, t^{i,j}_{1,1}), (s^{i,j}_{0,2}, t^{i,j}_{0,3}), \cdots, (s^{i,j}_{0,l-1}, t^{i,j}_{0,l}), (s^{i,j}_{0,l}, t^{i,j}_{0,1}),\\
    &(s^{i,j}_{1,1}, t^{i,j}_{0,2}), (s^{i,j}_{1,2}, t^{i,j}_{1,2}), \cdots, (s^{i,j}_{1,l-1}, t^{i,j}_{1,l-1}), (s^{i,j}_{1,l}, t^{i,j}_{1,l})\}
\end{align*}

Figure \ref{fig:perfect} shows $M^{i,j}_0$ and $M^{i,j}_1$ on preference lists of $S^{i,j}$.
\begin{figure}[ht!]
    \centering
\begin{tabular}{ m{1cm} m{1.4cm} m{0.9cm} m{2.5cm} m{1cm} m{1.4cm} m{0.9cm} m{1cm}}
 $s^{i,j}_{0,1}$ & $:~\underline{t^{i,j}_{0,1}}$ & $v_i$ & $t^{i,j}_{1,1}$ & $s^{i,j}_{0,1}$ & $:~ t^{i,j}_{0,1}$ & $v_i$ & $\underline{t^{i,j}_{1,1}}$\\
 $s^{i,j}_{0,2}$ & $:~\underline{t^{i,j}_{0,2}}$ & $v_i$ & $t^{i,j}_{0,3}$ & $s^{i,j}_{0,2}$ & $:~ t^{i,j}_{0,2}$ & $v_i$ & $\underline{t^{i,j}_{0,3}}$\\
 &&$\vdots$&&&&$\vdots$\\
 $s^{i,j}_{0,l-1}$ & $:~\underline{t^{i,j}_{0,l-1}}$ & $v_i$ & $t^{i,j}_{0,l}$ & $s^{i,j}_{0,l-1}$ & $:~ t^{i,j}_{0,l-1}$ & $v_i$ & $\underline{t^{i,j}_{0,l}}$\\
 $s^{i,j}_{0,l}$ & $:~\underline{t^{i,j}_{0,l}}$ & $v_i$ & $t^{i,j}_{0,1}$ & $s^{i,j}_{0,l}$ & $:~ t^{i,j}_{0,l}$ & $v_i$ & $\underline{t^{i,j}_{0,1}}$\\
 $s^{i,j}_{1,1}$ & $:~t^{i,j}_{0,2}$ & $v_j$ & $\underline{t^{i,j}_{1,2}}$ & $s^{i,j}_{1,1}$ & $:~ \underline{t^{i,j}_{0,2}}$ & $v_j$ & $t^{i,j}_{1,2}$\\
 $s^{i,j}_{1,2}$ & $:~t^{i,j}_{1,2}$ & $v_j$ & $\underline{t^{i,j}_{1,3}}$ & $s^{i,j}_{1,2}$ & $:~ \underline{t^{i,j}_{1,2}}$ & $v_j$ & $t^{i,j}_{1,3}$\\
 &&$\vdots$&&&&$\vdots$\\
 $s^{i,j}_{1,l-1}$ & $:~t^{i,j}_{1,l-1}$ & $v_j$ & $\underline{t^{i,j}_{1,l}}$ & $s^{i,j}_{1,l-1}$ & $:~ \underline{t^{i,j}_{1,l-1}}$ & $v_j$ & $t^{i,j}_{1,l}$\\
 $s^{i,j}_{1,l}$ & $:~t^{i,j}_{1,l}$ & $v_j$ & $\underline{t^{i,j}_{1,1}}$ & $s^{i,j}_{1,l}$ & $:~ \underline{t^{i,j}_{1,l}}$ & $v_j$ & $t^{i,j}_{1,1}$
\end{tabular}
    \caption{Matchings $M^{i,j}_0$ (left) and $M^{i,j}_1$ (right)}
    \label{fig:perfect}
\end{figure}

\begin{lem}\label{lem:perfect}
For a gadget $g_{i,j} = (S^{i,j}, T^{i,j})$, $M^{i,j}_0$ and $M^{i,j}_1$ are the only perfect matchings between $S^{i,j}$ and $T^{i,j}$. Furthermore, each $M^{i,j}_0$ and $M^{i,j}_1$ contains an unique envy-pair $(r,h)$ such that $r \in S^{i,j}$ and $h \in T^{i,j}$.
\end{lem}
\begin{proof}
Consider the induced subgraph $G^{i,j}$ that contains only $S^{i,j}$ and $T^{i,j}$. One can see that $G^{i,j}$ is a cycle of length $4l$. Hence there are only two perfect matchings between $S^{i,j}$ and $T^{i,j}$, and they are actually $M^{i,j}_0$ and $M^{i,j}_1$. Moreover, it is easy to check that $M^{i,j}_0$ contains only one envy-pair $(s^{i,j}_{1,1}, t^{i,j}_{0,2})$ and $M^{i,j_1}$ contains only one envy-pair $(s^{i,j}_{0,1}, t^{i,j}_{0,1})$.
\end{proof}

We now ready to show the NP-hardness of 0-1 Min-EP HRLQ.

\begin{lem}\label{lem:yes-ep}
If $G$ is a ``yes" instance of the Vertex Cover problem, then $I$ has a solution with at most $n^2 + m$ envy-pairs.
\end{lem}
\begin{proof}
If $G$ is a ``yes" instance of the Vertex Cover problem, then $G$ has a vertex cover of size exactly $K$. Let this vertex cover be $V_c \subseteq V$ and let $V_f = V \backslash V_c$. We construct a matching $M$ of $I$ according to $V_c$. We match each hospital in $V_c$ to each resident in $C$ and each hospital in $V_f$ to each resident in $F$ in an arbitrary way. Since $|C \cup F| = |V| = n$, there are at most $n^2$ envy-pairs between $C \cup F$ and $V$. 

Since $V_c$ is a vertex cover of $G$, for each edge $(v_i, v_j)$, either $v_i$ or $v_j$ is included in $V_c$. Thus for each gadget $g_{i,j} = (S^{i,j}, T^{i,j})$ corresponding to the edge $(v_i, v_j)$, if $v_i \in V_c$, we use $M^{i,j}_1$ as part of our matching $M$, otherwise, use $M^{i,j}_0$. By constructing in this way, it is easy to see that neither $v_i$ nor $v_j$ can be involved in an envy-pair. Also, by Lemma \ref{lem:perfect}, there is only one envy-pair between $S^{i,j}$ and $T^{i,j}$. Thus, the total number of envy-pairs is at most $n^2 + m$ w.r.t $M$.
\end{proof}

\begin{lem}\label{lem:no-ep}
If $G$ is a ``no" instance of the Vertex Cover problem, then any solution of $I$ has at least $n^2 + m + 1$ envy-pairs.
\end{lem}
\begin{proof}
It is equivalent to prove that if $I$ admits a feasible matching $M$ with less than $n^2 + m + 1$ envy-pairs, then $G$ has a vertex cover of size at most $K$. $M$ must be a perfect matching and must be a one-to-one correspondence between $C \cup F$ and $V$ in order to be a feasible matching. Since all $V$ are matched to $C \cup F$, for each gadget $g_{i,j} = (S^{i,j}, T^{i,j})$, by Lemma \ref{lem:perfect}, there are only two possibilities, $M^{i,j}_0$ and $M^{i,j}_1$ and either matching admits one envy-pair within each $g_{i,j}$. Totally we have $m$ envy-pairs between $S$ and $T$.

Suppose we choose $M^{i,j}_0$ for $g_{i,j}$. If $v_j$ is matched with a resident in $F$, there are $l$ envy-pairs between $v_j$ and $S^{i,j}$. Then we have $l+m = n^2 + m + 1$ envy-pairs, contradicting the assumption. Thus, $v_j$ must be matched with a resident in $C$. By the same argument, if $M^{i,j}_1$ is chosen, $v_i$ must be matched with a resident in $C$. Hence, for each edge $(v_i, v_j)$, either $v_i$ or $v_j$ is matched with a resident in $C$. It is obvious that the set of vertices matched with residents in $C$ is a vertex cover of size $K$. This completes the proof.
\end{proof}

Thus a polynomial-time algorithm for 0-1 Min-EP HRLQ would solve the Vertex Cover problem, implying P=NP.
\end{proof}

Note that the reduction above also implies NP-hardness of 0-1 Min-BP HRLQ because all envy-pairs are blocking pairs and the construction in Lemma \ref{lem:yes-ep} does not generate any wasteful pairs (non-envy blocking pairs). \cite{hamada2016hospitals} gives stronger results showing 0-1 Min-BP HRLQ is hard to approximate within the ratio of $(|H|+ |R|)^{1-\epsilon}$ for any positive constant $\epsilon$ even if all preference lists are complete. While, the reduction can not be used for 0-1 Min-EP HRLQ because there exists envy-free matchings when all preference lists are complete \cite{fragiadakis2016strategyproof} (or a weaker requirement such that all hospitals with positive lower quotas has complete preference lists over all residents \cite{krishnaa2020envy}) and a maximum-size envy-free matching can be found in linear time.

\subsection{A Simple Exponential-Time Algorithm}
Let $I$ be a given instance. Starting from $k = 1$, we guess a set $B$ of $k$ envy-pairs. There are at most $|E|^k$ choices of $B$. For each choice of $B$, we delete each $(r,h) \in B$. Let $I'$ be the modified instance. We apply Yokoi's algorithm to find an envy-free matching in $I'$. If the algorithm outputs an envy-free matching $M$, then it is the desired solution, otherwise, we proceed to the next guess. If we run out of all guess of $B$ for a fixed $k$, we increment $k$ by 1 and proceed as before until we find a desired solution.

\begin{thm}
There is an $O(|E|^{t+1})$-time exact algorithm for Min-EP HRLQ where $t$ is the number of envy-pairs in an optimal solution of a given instance.
\end{thm}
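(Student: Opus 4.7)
The plan is to verify both correctness and the running time of the enumeration algorithm. The single key observation I would establish first is a monotonicity claim: if $M$ is a matching in $I' = I \setminus B$ that is envy-free in $I'$, then every envy-pair of $M$ in the original instance $I$ must use an edge of $B$. Indeed, deleting edges can only remove pairs from consideration, and the relative preferences among surviving edges are unchanged, so any envy-pair $(r,h)$ of $M$ in $I$ with $(r,h)\notin B$ would already be an envy-pair of $M$ in $I'$, contradicting envy-freeness. Consequently, whenever Yokoi's algorithm succeeds on $I\setminus B$, the returned matching has at most $|B|$ envy-pairs in $I$.

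From this observation I would deduce correctness in two steps. For the ``does not stop too early'' direction, I would show that for every $k<t$ and every candidate $B$ of size $k$, the call to Yokoi's algorithm must fail: otherwise, the monotonicity claim would yield a feasible matching with at most $k<t$ envy-pairs in $I$, contradicting the optimality of $t$. For the ``succeeds at $k=t$'' direction, I would fix an optimal matching $M^*$ and let $B^*$ be its set of envy-pairs, so $|B^*|=t$; then $M^*$ is envy-free in $I\setminus B^*$ by construction, so when the outer loop reaches $k=t$ and the inner loop enumerates $B=B^*$, Yokoi's algorithm returns some envy-free matching of $I\setminus B^*$, which by the monotonicity claim has at most $t$ envy-pairs in $I$ and is therefore optimal.

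The complexity analysis is a routine count. For each $k\in\{1,\dots,t\}$ the algorithm enumerates at most $\binom{|E|}{k}\le |E|^{k}$ subsets $B$, and for each one it performs a single invocation of Yokoi's polynomial-time procedure. Folding the polynomial cost of Yokoi's algorithm into one factor of $|E|$, the total work is $\sum_{k=1}^{t} |E|^{k}\cdot O(|E|)=O(|E|^{t+1})$, matching the claimed bound.

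I do not anticipate any genuinely hard step; the only point requiring care is stating the monotonicity claim precisely, since ``envy-free in $I'$'' must be read with respect to the restricted preference lists of $I'$, and one must verify that any surviving edge keeps the preference relations needed to witness envy. Once that is spelled out, both directions of correctness and the running time follow immediately from the algorithm's description.
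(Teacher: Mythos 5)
Your proposal is correct and follows essentially the same route as the paper: the early-termination argument for $k<t$, the witness $B^*$ consisting of the envy-pairs of an optimal matching at $k=t$, and the $\sum_{k=1}^{t}|E|^k\cdot O(|E|)$ count are all identical. Your explicit ``monotonicity claim'' is just a more careful statement of the paper's assertion that the deleted set $B$ can introduce at most $|B|$ envy-pairs, and it is a welcome clarification rather than a deviation.
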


\begin{proof}
If the algorithm ends when $k < t$, then the output matching $M$ is an envy-free matching in $I'$ and the set $B$ can only introduce at most $k$ envy-pairs, contradicting that $t$ is minimum number of envy-pairs in any feasible matching of instance $I$. Consider the execution of the algorithm for $k = t$ and any optimal solution $M_{opt}$ and let our current guess $B$ contains exactly the $t$ envy-pairs of $M_{opt}$. Then it is easy to see that $M_{opt}$ is envy-free in $I'$ and satisfies all the lower quotas. Hence if we apply Yokoi's algorithm, we find a matching $M$ is envy-free and satisfies all the lower quotas. Thus $M$ has at most $t$ envy-pairs in the original instance $I$ and it is our desired solution.

For the time complexity, Yokoi's algorithm runs in $O(|E|)$ time and for each $k$, we apply Yokoi's algorithm at most $|E|^k$ times. Therefore, the total time is at most $\sum^{t}_{k=1} O(|E|^{k+1}) = O(|E|^{t+1})$.
\end{proof}

\section{Minimum-Envy-Resident HRLQ}
In this section, we consider the problem of minimizing the number of envy-residents in HRLQ. We prove a NP-hardness result for Min-ER HRLQ in the following theorem.

\begin{thm}
Min-ER HRLQ is NP-hard.
\end{thm}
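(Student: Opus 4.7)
The plan is to adapt the reduction from the 0-1 Min-EP HRLQ theorem: I would reduce from Vertex Cover using exactly the same HRLQ instance (residents $C \cup F \cup S$, hospitals $V \cup T$, $[1,1]$ quotas, and the preference lists defined above) and re-do the case analysis in terms of envy-residents rather than envy-pairs. By Lemma~\ref{lem:perfect}, each gadget $g_{i,j}$ still admits only the two perfect matchings $M^{i,j}_0$ and $M^{i,j}_1$, so the only degrees of freedom in a feasible matching are the bijection between $C \cup F$ and $V$ and the choice of $M^{i,j}_0$ versus $M^{i,j}_1$ in each gadget.

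The first and most delicate step is a careful count of envy-residents contributed by each gadget. In $M^{i,j}_0$, every $s^{i,j}_{0,a}$ is matched to its top choice and cannot carry envy, while each $s^{i,j}_{1,a}$ prefers $v_j$ over its $t^{i,j}_{1,\cdot}$-match. Since $v_j$ has preference $[[C]] \succ [[S_j]] \succ [[F]]$, it prefers every $s^{i,j}_{1,a}$ over any $F$-resident but not over any $C$-resident. Thus if $v_j$ is matched into $C$ only the single internal envy-pair of Lemma~\ref{lem:perfect} survives and $S^{i,j}$ contributes exactly one envy-resident (namely $s^{i,j}_{1,1}$); if $v_j$ is matched into $F$, all $l$ residents $s^{i,j}_{1,1},\dots,s^{i,j}_{1,l}$ become envy-residents. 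The symmetric statement holds for $M^{i,j}_1$ with $v_i$ and $\{s^{i,j}_{0,a}\}$. Hence each gadget contributes at least one envy-resident, and at least $l = n^2 + 1$ envy-residents whenever both endpoints $v_i,v_j$ are matched into $F$.

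With this per-gadget count in hand, I set the threshold $T = m + K$. For the forward direction, given a cover $V_c$ of size $K$, match $V_c$ to $C$ and $V \setminus V_c$ to $F$, aligning indices in increasing order on both sides; in each gadget choose $M^{i,j}_0$ or $M^{i,j}_1$ according to which endpoint of $(v_i,v_j)$ lies in $V_c$. A direct check shows that $F$ contributes no envy-residents and $C$ contributes at most $K$, while every gadget contributes exactly one envy-resident from $S$, for a total of at most $m + K$. For the converse, in any feasible matching, since the $V$-hospitals have quota $[1,1]$ and no edges run from $T$ to $C \cup F$, the set $C \cup F$ must be in bijection with $V$, inducing a partition of $V$ into a $K$-subset (matched to $C$) and its complement (matched to $F$); if this $K$-subset is not a vertex cover, some edge has both endpoints matched into $F$ and its gadget contributes $l$ envy-residents, while the remaining $m-1$ gadgets contribute at least one each, for a total of at least $m - 1 + l = m + n^2 > m + K = T$.

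The main obstacle is precisely the case analysis inside the gadget: unlike envy-pairs, envy-residents count each resident only once no matter how many envy-pairs it participates in, so I have to identify exactly which residents appear in \emph{some} envy-pair rather than sum up pair counts. The choice $l = n^2 + 1$ is what makes the gap go through: the envy contribution from $C \cup F$ is crudely bounded by $|C \cup F| = n$, strictly dominated by any single bad gadget's contribution of $l$, so the threshold $T = m + K$ cleanly separates yes- from no-instances and NP-hardness follows.
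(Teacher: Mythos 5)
Your proof is correct, but it takes a genuinely different route from the paper: the paper reduces from CLIQUE using a fresh construction (residents $e^k_{i,j}$ with $t=n+1$ copies per edge and a large hospital $x$ with quota $[mt,mt]$ that absorbs all of $E'$), whereas you reduce from Vertex Cover by recycling the Min-EP gadget instance and recounting envy-residents instead of envy-pairs. Your per-gadget case analysis checks out: in $M^{i,j}_0$ every $s^{i,j}_{0,a}$ sits at its top choice, $s^{i,j}_{1,1}$ is always an envy-resident via $t^{i,j}_{0,2}$, and the remaining $s^{i,j}_{1,a}$ become envy-residents exactly when $v_j$ is matched into $F$ (symmetrically for $M^{i,j}_1$ and $v_i$), so each gadget contributes $1$ envy-resident when the relevant endpoint is covered and $l=n^2+1$ when it is not; combined with Lemma~\ref{lem:perfect} and the forced bijection between $C\cup F$ and $V$, the gap $m+K$ versus $m+n^2$ separates cleanly (for $n\ge 2$, so $K\le n<n^2$ — one could even use the cruder threshold $m+n$ and skip the index-alignment argument for $F$). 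What your approach buys is notable: since all hospitals in your instance have quota $[1,1]$, you establish NP-hardness of \emph{0-1} Min-ER HRLQ, a restriction the paper defines but does not cover — its CLIQUE reduction essentially needs the large-quota hospital $x$. What the paper's approach buys is a hardness source (clique counting) that couples more directly to the envy-resident objective, since the number of non-envious edge-residents literally counts edges inside the chosen $K$-set, whereas your argument must carefully distinguish "one envious resident" from "$l$ envious residents" inside each cycle gadget.
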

\begin{proof}
We give a polynomial-time reduction from the NP-complete problem CLIQUE. In CLIQUE, we are given a graph $G=(V,E)$ and a positive integer $K \leq |V|$, and asked if $G$ contains a complete graph with $K$ vertices as a subgraph.

{\bf Reduction:} Given a graph $G=(V,E)$, and a positive integer $K \leq |V|$, which is an instance of the CLIQUE problem, we construct in instance $I$ of Min-ER HRLQ. Define $n = |V|$, $m = |E|$ and $t = n+1$. The set of residents is $R = C \cup F \cup E'$ and the set of hospital is $H = V \cup \{x\}$. Each set is defined as follows:

\begin{align*}
    C &= \{c_i ~|~ 1 \leq i \leq K\}\\
    F &= \{f_i ~|~ 1 \leq i \leq n-K\}\\
    E' &= \{e^k_{i,j} ~|~ (v_i, v_j) \in E, 1 \leq k \leq t\}\\
    V &= \{v_i ~|~ 1 \leq i \leq n\}\\
\end{align*}

Each hospital in $V$ has a quota $[1,1]$ and the hospital $x$ has a quota $[mt, mt]$.

The preference lists of residents and hospitals is shown as follows:

\begin{align*}
    c_i &: [[V]] && (1 \leq i \leq K)\\
    f_i &: [[V]] && (1 \leq i \leq n-K)\\
    e^k_{i,j} &: v_i ~~~ v_j ~~~ x && ((v_i, v_j) \in E, 1 \leq k \leq t)\\
    \\
    v_i &: [[C]] ~~ [[E'_i]] ~~ [[F]] && (1 \leq i \leq n)[1,1]\\
    x &: [[E']] && [mt,mt]
\end{align*}
, where $[[V]]$ are a fixed order of all hospitals in $V$ in an increasing order of indices, $[[C]]$ and $[[F]]$ are a fixed order of all the residents in $C$ and $F$, respectively, in an increasing order of indices, $[[E'_i]]$ is an arbitrary order of all the residents in $E'$ that is acceptable to $v_i$ as determined by the preference lists of residents. $[[E']]$ are a fixed order of all residents in $E'$ in an increasing order of indices.

\begin{lem}\label{lem:yes-er}
If $I$ is a ``yes" instance of CLIQUE, then there is a feasible matching of $I$ having at most $(m - {K \choose 2})t+n$ envy-residents.
\end{lem}
\begin{proof}
Suppose that $G$ has a clique $V_c$ of size $K$. We will construct a matching $M$ of $I$ from $V_c$. We assign all the residents in $C$ to the hospitals in $V_c$ in an arbitrary way and all the residents in $F$ to the hospitals in $V \backslash V_c$ in an arbitrary way. Further, we match all the residents in $E'$ to $\{x\}$. Clearly, $M$ is feasible. Since $V_c$ is a clique, $(v_i, v_j) \in E$ for any pair $v_i, v_j \in V_c (i \neq j)$. There are $t$ residents $e^k_{i,j} (1 \leq k \leq t)$ associated with the edge $(v_i,v_j)$. Each of these residents $e^k_{i,j}$ are assigned to the hospital $x$ in $M$, which is the worst in $e^k_{i,j}$'s preference list and the hospitals $v_i$ and $v_j$ are assigned to residents in $C$, better than $e^k_{i,j}$. Hence all these residents $e^k_{i,j} (v_i, v_j \in V_c, i \neq j)$ are not envy-residents. There are ${K \choose 2}t$ such residents and total number of residents is $mt + n$. Hence there are at most $(m - {K \choose 2})t+n$ envy-residents.
\end{proof}

\begin{lem}\label{lem:no-er}
If $I$ is a ``no" instance of CLIQUE, then any feasible matching of $I$ contains at least $(m - {K \choose 2}+ 1)t$ envy-residents.
\end{lem}
\begin{proof}
Suppose that there is a feasible matching $M$ of $I$ that contains less than $(m - {K \choose 2}+ 1)t$ envy-residents. We show that $G$ contains a clique of size $K$. Note that $M$ must match all the resident in $E'$ to $\{x\}$ because $\{x\}$ has a quota $[mt, mt]$ and is only acceptable to $E'$. Thus the hospitals $V$ must only match to the residents in $C \cup F$ and there is one-to-one correspondence between $V$ and $C \cup F$ in order for $M$ to be feasible. Define $V_c$ be the set of hospitals matched with $C$. Clearly $V_c = K$. We claim that $V_c$ is a clique.

The total number of residents is $mt + n$. Since we assume that there are less than $(m - {K \choose 2}+ 1)t$ envy-residents, there are more than $n + ({K \choose 2} - 1)t$ non-envy-residents (obviously an non-envy-resident is a resident that is not an envy-resident). Since $|C| + |F| = n$, there are more than $({K \choose 2}-1)t$ non-envy-residents in $E'$. Consider the following partition of $E'$ into $t$ subsets: $E'_k = \{e^k_{i,j} ~|~ (v_i,v_j) \in E\}$ for each $1 \leq k \leq t$. There must exists a $k$ such that $E'_k$ contains at least $K \choose 2$ non-envy-residents by Pigeonhole principle. In order for $e^k_{i,j}$ to be a non-envy-resident, we note that both $v_i$ and $v_j$ must match to some resident in $C$ because if $v_i$ or $v_j$ is matched to $F$, we have an envy-pair contains $e^k_{i,j}$. Thus any pair of vertices in $V_c$ causes such a non-envy-resident, implying that $V_c$ is a clique. 
\end{proof}

Because $t = n + 1$, we have $(m - {K \choose 2}+ 1)t > (m - {K \choose 2})t+n$. Hence by Lemma \ref{lem:yes-er} and Lemma \ref{lem:no-er}, Min-ER HRLQ is NP-hard.
\end{proof}

\section{Conclusion and Open Problems}
In this paper, we give NP hardness results of minimizing envy in terms of envy-pairs and envy-residents in the Hospitals/Resident problem with Lower Quota. Hamada et al. \cite{hamada2016hospitals} showed hardness of approximation for the problem of minimizing the number of blocking pairs among all feasible matchings. It would be interesting to show hardness of approximation for minimizing envy in the HRLQ problem.

\bibliographystyle{plain}
\bibliography{references}

\end{document}